\newtheorem{proposition}{Proposition}
\pgfplotsset{compat=1.15}
\definecolor{citeblue}{rgb}{0.1,0,.4}
\definecolor{darkblue}{rgb}{0,0,0.6}
\definecolor{darkred}{rgb}{0.7,0,0}
\begin{document}

\title{Boosting Isomorphic Model Filtering with Invariants
}


\author{
   \href{https://orcid.org/0000-0001-6655-2172}{\includegraphics[scale=0.06]{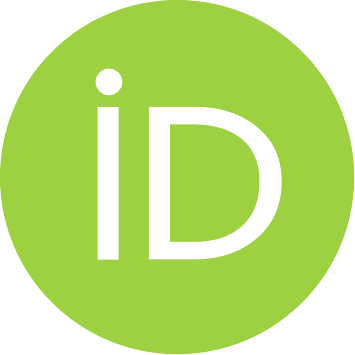}\hspace{1mm}}%
   Jo\~ao Ara\'ujo\\
    Universidade Nova de Lisboa, Lisbon, Portugal \\
    \texttt{jj.araujo@fct.unl.pt}           
	 \And
   \href{https://orcid.org/0000-0002-2067-0568}{\includegraphics[scale=0.06]{orcid.pdf}\hspace{1mm}}%
   Choiwah Chow\thanks{Corresponding author} \\
   Universidade Aberta, Lisbon, Portugal \\
   \texttt{1702603@estudante.uab.pt}    
	 \And
   \href{https://orcid.org/0000-0003-3487-784X}{\includegraphics[scale=0.06]{orcid.pdf}\hspace{1mm}}%
        Mikol\'a\v{s} Janota \\
        Czech Technical University in Prague, Czechia \\
        \texttt{mikolas.janota@cvut.cz}    
}

\maketitle

\begin{abstract}
The enumeration of finite models 
is very important to the working discrete mathematician  (algebra, graph theory, etc) and hence the search for effective methods to do this task is a critical goal  
 in discrete computational mathematics. However, it is hindered by the possible existence of many isomorphic models, which
usually only add noise. Typically, they are filtered out {\em a posteriori}, a step that might take a long time just to discard redundant models.  
This paper proposes a novel approach to split the generated models
into mutually non-isomorphic blocks. To do that we use well-designed hand-crafted invariants
as well as randomly generated invariants. The blocks are then tackled separately and possibly in parallel.
This approach is integrated into Mace4 (the most popular tool among mathematicians) where it shows tremendous 
speed-ups for a large variety of algebraic structures.

\keywords{Computational algebra \and Finite model enumeration \and Isomorphism \and Invariant \and Random generation of invariants \and Mace4 \and Hashing}
\end{abstract}

\section{Introduction}%
\label{sec:introduction} 
To study and get intuition on different types of relational algebras (groups, semigroups, and their ordered versions, quasigroups, fields, rings, MV-algebras, lattices, etc.), mathematicians resort to libraries of all order $n$ models (for small values of $n$) of the algebra they are interested in. These libraries allow experimentation such as forming and/or testing conjectures etc., to gain insights into the algebras in question. Indeed, GAP~\cite{GAP4}, the most popular computational algebra system, has many libraries of different algebras (semigroups, quasigroups, etc.), and more are needed. These libraries are so important that the search for them has a long history in mathematics predating for many years the use of computers. (See Appendix B of~\cite{bookDixonMortimer_1996PermutationGroup}; and for more recent results please see OEIS~\cite{oeis}, where many of the sequences are the number of order $n$ non-isomorphic models of a given class of algebras.)

Many of these algebras can be defined in first-order logic (FOL) and there are tools to allow mathematicians to encode their algebras and produce a meaningful library. The problem is that usually the tools, such as Mace4~\cite{Mccune_2003_tech264}, which can be easily learned and used by mathematicians and hence is very popular among them, generate too many isomorphic models (see Section~\ref{sec:mathbackground} for the definition of isomorphism) that need to be eliminated~\cite{DBLP:journals/jar/AudemardBH06}. 
For example, out of the 230,984 models for the implication algebras (see definition in~\cite{alfAraujoMatosRamires}) of order 10 generated by Mace4,
only 18 ($\approx 0.0078\%$) of them are pairwise non-isomorphic.

Redundant models may either be eliminated during the search phase or filtered out afterwards.
Guaranteeing that search never produces isomorphic models is a hard problem and is rarely seen in modern solvers.
This paper, therefore, tackles the second problem, i.e.,
the removal of redundant models from an already enumerated set.

In the context of finite model enumeration, the complexity of checking whether two models are isomorphic is
only part of the problem.  Another source of complexity is the large number of
models that need to be checked.  If every model is checked against all others, then the
performance degrades rapidly as the total number of models increases.

If we assign to each generated model a vector that is invariant under isomorphism and put all models having the same invariant vectors into separate blocks, then models across the blocks will not be isomorphic. This splits the problem into substantially smaller sub-problems. Moreover, processing of the blocks can easily be done in parallel as models across blocks cannot be isomorphic.
Parallel processing is an important facet of our approach since
modern-day computers are more often than not equipped with multiple cores.

Our contributions to the area of isomorphic model elimination are\footnotemark[1]:
\begin{enumerate}
	\item Devise an invariant-based parallel algorithm that can be applied to algebras defined by first-order logic formulas and containing at least one binary operation or relation (see Section~\ref{sec:algorithm}).
	\item Design a small basic set of invariants that have high discriminating power, and yet are inexpensive to compute (see Section~\ref{sec:basic:invariants}).
	\item Add randomly generated invariants to the invariant-based algorithm to help discover invariants of high discriminating power (see Section~\ref{sec:randominvariants}).
	\item Use a hash-map to store models partitioned by the invariant-based algorithm to allow fast storage and retrieval of models in the same block (see Section~\ref{sec:algorithm}). 
\end{enumerate}

Our goal is to help mathematicians on two levels: first, provide them with a tool on their desktop that quickly produces a library for the algebra they are working on; second, run the tool on a cluster of computers to precompute libraries for the most famous classes of algebras, and add them to GAP~\cite{GAP4} or a similar system.

\footnotetext[1]{Some preliminary ideas and results have been presented in~\cite{araujo_et_al:LIPIcs.CP.2021.4}. This paper adds more invariants including randomly generated invariants and proves their validity. It also reports substantially more experimental results and drills deeper into related work.}


\section{Definitions and Preliminaries}
\label{sec:mathbackground}
We give a brief overview of the mathematics used in the subsequent sections;
we draw mainly from Chapter 2 of~\cite{bookBurrisSankappanavr_2000UniversalAlgebra}.

A relational algebra is a triple $(D, \Sigma_F, \Sigma_R)$,
where $D$ is a set and $\Sigma_F$ is a set of operations, that is, functions $f:\,D^n \rightarrow D$
and $\Sigma_R$  is a set of relations, i.e., $R\in\Sigma_R$ is a subset of $D^n$.
%
The \textit{order} of a relational algebra is the size of its domain $D$. (Recall that examples of relational algebras are all imaginable algebras, (di)graphs, etc.; in the following, by algebra we mean relational algebra.)  

While  the concept of isomorphism is ubiquitous to scientific literature, its definition appears under slight variations.
Throughout this paper, we rely on the following definition.
Let $A$ and $B$  be structures defined on the same signature $\Sigma_F, \Sigma_R$.
A function $f$ from $A$ to $B$ is said to be an \emph{isomorphism} if it is a bijection and preserves all operations and relations. This means that if $g\in\Sigma_F$,  with the respective interpretations $g^A$ and $g^B$ in $A$ and $B$,
then $f(g^A(a_1,\dots,a_n))=g^B(f(a_1),\dots,f(a_n))$, for all $a_1,\dots,a_n\in A$.
Analogously, $f$ preserves $R\in\Sigma_R$ of arity $n$, with the respective interpretations $R^A$ and $R^B$ in $A$ and $B$, if $(a_1,\ldots, a_n)\in R^A$ implies 
$(f(a_1),\dots,f(a_n))\in R^B$, for all $a_1,\ldots,a_n\in A$.



An important property of isomorphisms is that they preserve sets defined by some fixed formula.
More precisely, suppose we have two finite relational algebras $A$ and $B$, on a signature $\Sigma$, isomorphic under $f:A\to B$.
In addition, suppose we have a set $S$ contained in $A^k$ and definable by a FOL
formula $\Phi$ in the language of $\Sigma$. Then $f(S)$ is precisely the subset of $B^k$ that satisfies $\Phi$
(\textit{cf.}~Theorem 1.1.10 in~\cite{Marker2003-MARMTA-14}).

For example, suppose we have two isomorphic finite algebras:  $(A,*)$ and
$(B,*)$, with $f:A\to B$ an isomorphism. Suppose also that $S$ is the set $\{
  (x,y) \in A \mid (\exists z \in A) \ (x *^A z) *^A y = x *^A (z  *^A y)\}$. As $S$ is
the set of all elements in $A^2$ that satisfy a FOL in a language with the function symbol~$*$,
then $f(S):=\{(f(x),f(y))\mid (x,y)\in S\}$  is precisely the set of all pairs  $(x,y)\in B^2$ such that
$(\exists z \in B) \ (x *^B z) *^B y = x *^B (z  *^B y)$.

This idea is usually expressed by saying that sets definable by FOL formulas
are invariant (or preserved) under isomorphism. This guarantees that when we
split the list of algebras using invariants  based on defining formulas, algebras in different blocks are
non-isomorphic; algebras inside the same block might be isomorphic or non-isomorphic. 
Therefore, to discard the redundant algebras we only have to
check within each block. This is the ground for our invariants-based algorithm.
For future reference, we state these considerations as a proposition.


\begin{proposition}\label{prop:solution:FOLformula}
Let $A$ and $B$ be algebras of a signature $\Sigma$ and $f:A\to B$ be  their
isomorphism. Then any vector $(a_1, a_2, \dots, a_m) \in{A^m}$ satisfies a
first-order formula in the common language $\Sigma$ if
and only if the vector $(f(a_1), f(a_2), \dots, f(a_m))\in{B^m}$ satisfies
the same first-order formula in $B$.
\end{proposition}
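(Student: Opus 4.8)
The plan is to argue by induction on the syntactic structure of the first-order formula, since formulas (and terms) are built up inductively from atoms by the usual connectives and quantifiers. Before touching formulas, the first step I would take is to establish a companion lemma for terms: for every term $t$ in the language $\Sigma$ with variables among $x_1,\dots,x_m$, the isomorphism commutes with evaluation, i.e.\ $f(t^A(a_1,\dots,a_m)) = t^B(f(a_1),\dots,f(a_m))$ for all $a_1,\dots,a_m\in A$. This is itself a short induction on the build-up of $t$: the base case handles variables (and constants, viewed as nullary operations) directly, and the inductive step applies the defining property $f(g^A(t_1,\dots,t_n)) = g^B(f(t_1),\dots,f(t_n))$ to the outermost operation symbol $g$, feeding in the inductive hypothesis for the subterms.

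With the term lemma in hand, I would run the main induction on $\Phi$. For atomic formulas there are two shapes. An equality $t_1 = t_2$ is transported using the term lemma together with the injectivity of $f$, so that $t_1^A(\bar a) = t_2^A(\bar a)$ if and only if $t_1^B(f(\bar a)) = t_2^B(f(\bar a))$. A relational atom $R(t_1,\dots,t_n)$ is transported using that $f$ preserves $R$, combined once more with the term lemma to reduce to tuples of elements; here I need the equivalence in \emph{both} directions, which is the content of $f$ being an isomorphism, namely that $f$ and $f^{-1}$ are both homomorphisms.

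The inductive step is then almost mechanical, precisely because the statement being proved is a biconditional. For $\neg\psi$ and for $\psi\wedge\chi$ (and the other propositional connectives, which are definable from these) the claim follows immediately by negating or conjoining the inductive hypotheses. The only case that uses more than the inductive hypothesis is the quantifier case: for $\exists y\,\psi$, a witness $b\in A$ for the tuple $(a_1,\dots,a_m)$ yields the witness $f(b)\in B$ for $(f(a_1),\dots,f(a_m))$, and conversely any witness $c\in B$ has the form $c = f(b)$ because $f$ is surjective. Bijectivity is exactly what makes the range of quantification match on both sides, and the universal quantifier is then handled dually, or by rewriting $\forall y$ as $\neg\exists y\,\neg$.

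I expect the only genuinely delicate point to be the atomic relational case. The notion of isomorphism used here must be read as preserving relations in both directions (equivalently, that the inverse bijection is again an isomorphism); if only the left-to-right implication for $R$ were available, the right-to-left implication for relational atoms, and hence the biconditional, could fail. Once that reading is fixed, the negation case forces the conclusion to be a true equivalence rather than a one-sided implication, and everything else reduces to a routine structural induction.
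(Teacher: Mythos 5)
Your proof is correct and is exactly the standard argument behind this fact: the paper itself offers no proof of Proposition~\ref{prop:solution:FOLformula}, deferring to Theorem~1.1.10 of Marker's \emph{Model Theory}, whose proof proceeds by the same two-stage induction you describe --- first a term lemma showing $f$ commutes with term evaluation, then structural induction on formulas with bijectivity handling the quantifier case. Your caveat about the relational atomic case is also well taken: the paper's definition literally demands only the forward implication for relations, so the biconditional requires reading ``isomorphism'' as preserving relations in both directions (equivalently, that $f^{-1}$ is again an isomorphism), a reading the paper implicitly adopts, e.g.\ when it invokes $f^{-1}$ as an isomorphism in the proof of Proposition~\ref{prop:u3}.
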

%
%


\section{Basic Invariants}%
\label{sec:basic:invariants}


The goal of this section is to introduce our list of basic invariants. 

We start by observing that the axioms satisfied by an algebra might render some invariants useless.  For example, if the algebra is a group, then the invariant that counts the number of idempotents would be useless (there is exactly one idempotent in each group).   Thus, we need multiple invariants with inner algebraic meaning and large discriminating power in order to  target as many different algebraic properties/classes as possible. On the other hand, we should choose properties that are inexpensive to compute and not very many as that could slow down the computation. 

Our choices of invariants are based on concepts ubiquitous in various kinds of algebras. For example, one of our invariants (\ref{inv:idempotent}) is based on the fact that idempotents appear in many algebras; in particular, it is well-known that every finite semigroup has at least one idempotent and hence this invariant is useful for a wide range of algebras, especially those that have a semigroup reduct.

As observed above, the overwhelming majority of the most popular algebras are defined using operations of arity at most 2 (see page 26 of~\cite{bookBurrisSankappanavr_2000UniversalAlgebra}), so we design most of our invariants around binary operations. We have 10 invariants from binary operations, 4 from binary relations, 4 from unary operations, and 1 from ternary operations to target different common algebraic structures.  Together they have high discriminating powers, and yet are easy and inexpensive to compute.

In the following discussions on invariants, the domain of the algebra is denoted by $D=\{1,2,\cdots,n\}$.

\subsection{Invariants from Unary Operations}
\label{subsec:unary_invariants}
For each unary operation $g$ in the algebra, we compute the invariants for each element $x\in{D}$:
\begin{enumerate}[label=\textbf{U\arabic*}]
	\item 1 \label{unary:inv:1} if $g(x) = x$, 0 otherwise (fixed point);
	\item 1 \label{unary:inv:2} if $g(x) \neq {x}$ and $g(g(x)) = x$, 0 otherwise (transposition);
	\item \label{unary:inv:3} The number of $y\in{D}$ such that $g(y)=x$ (size of the inverse image);
	\item \label{unary:inv:4} The number of $y\in{D}$ such that $g(g(y))=x$ (size of the inverse image under $g^2$).
\end{enumerate}

The correctness of these invariants 
follows readily from Proposition~\ref{prop:solution:FOLformula}.
Of course, the correctness of these invariants can also be proved directly (without using Proposition~\ref{prop:solution:FOLformula}); as a sample illustration, we provide the details for invariant~\ref{unary:inv:3} in the next result. 

\begin{proposition}\label{prop:u3}
	Let $f$ be an isomorphism of algebras $A$ and $B$
	and $v_A, v_B$  be invariant vectors calculated according to~\ref{unary:inv:3} for $A$ and $B$, respectively.
	Then $v_A=v_B$.
\end{proposition}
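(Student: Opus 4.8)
The plan is to reduce everything to the defining equation of an isomorphism, namely $f(g^A(y)) = g^B(f(y))$ for all $y \in A$, and then exploit the bijectivity of $f$ to transport inverse images. First I would fix an element $x \in A$ and consider its inverse image $(g^A)^{-1}(x) = \{y \in A : g^A(y) = x\}$, whose cardinality is exactly the entry of $v_A$ associated with $x$ under invariant~\ref{unary:inv:3}. I would then show that $f$ restricts to a bijection between $(g^A)^{-1}(x)$ and $(g^B)^{-1}(f(x))$.

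The heart of the argument is the chain of equivalences $g^A(y) = x \iff f(g^A(y)) = f(x) \iff g^B(f(y)) = f(x)$, where the first step uses injectivity of $f$ and the second uses the homomorphism property. This shows that $y \mapsto f(y)$ sends $(g^A)^{-1}(x)$ into $(g^B)^{-1}(f(x))$; since $f$ is injective on $A$ and surjective onto $B$, this restriction is itself a bijection, and hence $|(g^A)^{-1}(x)| = |(g^B)^{-1}(f(x))|$ for every $x \in A$.

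The remaining point is to make precise what ``$v_A = v_B$'' means, since the two vectors are a priori indexed by different domains. I would state explicitly that the invariant of~\ref{unary:inv:3} is compared as a multiset (equivalently, a sorted tuple) of the per-element counts, so that the labelling of elements is irrelevant. With that convention, the per-element equality above, together with the fact that $f$ is a bijection---so that as $x$ ranges over $A$, $f(x)$ ranges over all of $B$ without repetition---immediately yields that the multiset of counts for $A$ coincides with that for $B$, i.e.\ $v_A = v_B$.

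I expect the only real obstacle to be bookkeeping rather than mathematical depth: attributing each of the two middle equivalences to the correct property (injectivity versus the homomorphism law), and being careful that the comparison of vectors is understood up to the relabelling induced by $f$. Everything else is a direct unwinding of the definitions, and indeed the statement is just the concrete, computation-free instance of Proposition~\ref{prop:solution:FOLformula} applied to the defining formula $g(y) = x$ of~\ref{unary:inv:3}.
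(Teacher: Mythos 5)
Your proposal is correct and follows essentially the same route as the paper: both arguments reduce the claim to showing that $f$ carries the inverse image $\{y : g^A(y)=x\}$ bijectively onto $\{z : g^B(z)=f(x)\}$, and then use that $f$ is a bijection of the domains to conclude the sorted (multiset) vectors coincide. The only difference is cosmetic: you obtain the reverse inclusion directly from the equivalence chain together with surjectivity of $f$, whereas the paper gets it by applying the forward inclusion to the inverse isomorphism $f^{-1}$ and then invoking finiteness to upgrade an inclusion of equal-cardinality sets to an equality.
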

\begin{proof}
	For $C\in \{A,B\}$, $x\in C$,
	and $g^C$ an interpretation of $g$ in $C$,
	define $K^C_x:=\{y \in C \mid g^C(y)=x\}$. We claim that $K^B_{f(x)}=f(K^A_{x})$,
	i.e.\ $x$ and $f(x)$ will be represented by the same number $\lvert{K}^B_{f(x)}\rvert=\lvert{f}(K^A_{x})\rvert$,
	in the (sorted) vectors $v_A$ and $v_B$,  respectively.

	If $y \in K^A_{x}$, then $g^A(y)=x$ so that $f(g^A(y))=f(x)$ and hence $g^B(f(y))=f(x)$, that is, $f(K^A_{x}) \subseteq K^B_{f(x)}$. It follows that 
	$$\lvert{K}^A_{x}\rvert=\lvert{f}(K^A_{x})\rvert \leq \lvert{K}^B_{f(x)}\rvert.$$

	As $f^{-1}:B\to A$ is an isomorphism too, the foregoing argument shows that 
	$$\lvert{K}^B_{u}\rvert=\lvert{f}^{-1}(K^B_{u})\rvert \leq \lvert{K}^A_{f^{-1}(u)}\rvert.$$
	Taking in the previous formula $u=f(x)$ we get 
	$$\lvert{K}^B_{f(x)}\rvert=\lvert{f}^{-1}(K^B_{f(x)})\rvert \leq \lvert{K}^A_{f^{-1}(f(x))}\rvert=\lvert{K}^A_{x}\rvert.$$

	Now $\lvert{K}^A_{x}\rvert=\lvert{f}(K^A_{x})\rvert \leq \lvert{K}^B_{f(x)}\rvert\leq \lvert{K}^A_{x}\rvert$
	implies $\lvert{K}^A_{x}\rvert=\lvert{f}(K^A_{x})\rvert = \lvert{K}^B_{f(x)}\rvert$. As $f(K^A_{x}) \subseteq K^B_{f(x)}$ it follows that $f(K^A_{x}) = K^B_{f(x)}$ (as we are dealing with finite structures).

%
%
\end{proof}

Similar arguments can easily be used to directly prove the correctness of the other invariants.
  

\subsection{Invariants from Binary Operations}
\label{subsec:binary_invariants}

For each domain element $x\in{D}$, we compute the following invariants for each binary operation in the algebra:
\begin{enumerate}[label=\textbf{B\arabic*}]
	\item The smallest integer $n$ such that $x^n = x^k, n > k >= 1$ where we define $x^n$ to be $(\ldots(x*x)*x)*x)\ldots$ for $n$ $x$'s (\emph{periodicity}).
	\item The number of $y\in D$ such that $x = (xy)x$ (\emph{number of inverses}).
	\item The number of distinct $xy$ for all $y\in{D}$ (\emph{size of right ideal}).
	\item The number of distinct $yx$ for all $y\in{D}$ (\emph{size of left ideal}).
	\item \label{inv:idempotent} 1 if $xx=x$, 0 otherwise (\emph{idempotency}).
	\item The number of $y\in D$ such that $x(yy) = (yy)x$ (\emph{number of commuting squares}).
	\item The number of $y\in D$ such that $x=yy$  (\emph{number of square roots}).
	\item The number of $y\in D$ such that $x(xy) = (xx)y$ (\emph{number of square associatizers}).
	\item The number of pairs of $y, z\in{D}$ such that $zy = yz = x$ (\emph{number of commuting pairs}).
	\item The number of $y\in{D}$ such that there exist pairs of $s, t\in{D}$ where $x=st$ and $y=ts$ (\emph{number of conjugates}).
\end{enumerate}

\subsection{Invariants from Binary Relations}
\label{subsec:binary_relations_invariants}

For each domain element $x\in{D}$, the following invariants are calculated for each binary relation $R$:
\begin{enumerate}[label=\textbf{R\arabic*}]
	\item The number of distinct $y$ such that $R(x, y)$.
	\item The number of distinct $y$ such that $R(y, x)$.
	\item 1 if $R(x, x)$, 0 otherwise. (reflexivity)
	\item The number of $y$ such that $R(x, y)\ \&\  R(y, x)$.
\end{enumerate}

\subsection{Invariants from Ternary Operations}
\label{subsec:ternary_operation_invariants}
Ternary operations are very rare. Indeed, no ternary operation exists in the definition in any of the 158 algebras listed in the ALF database~\cite{alfAraujoMatosRamires}, although a few of them  come from the Skolemization of binary operations. Moreover, calculations involving ternary operations are often very expensive as deeply nested loops are involved. Thus, only one simple invariant would be included in the algorithm. For each domain element $x\in{D}$, we compute one invariant for each ternary operation:
\begin{enumerate}[label=\textbf{T\arabic*}]
	\item The number of times $x$ appears in the ternary operation table. (frequency)
\end{enumerate}

We call the hand-crafted invariants listed above the \emph{basic invariants} to differentiate them from the randomly generated invariants which will be discussed in Section~\ref{sec:randominvariants}. It should be simple to see that the validity of these basic invariants follows from Proposition~\ref{prop:random:invariant}.


\section{Random Invariants}%
\label{sec:randominvariants}
As discussed in Section~\ref{sec:basic:invariants}, we need different invariants to target different algebraic structures. 
 The basic invariants are inspired by our knowledge of the most popular algebras, however, there are many other (less common) algebraic structures and new ones are permanently appearing.  

Therefore, we need a general way (adaptable to each  class of algebras)  of generating invariants with good discriminating power.  A practical solution to this problem is to generate a large set of invariants with a random number of operations and a random number of variables, and then automatically discover the best subset to use.

\subsection{Generation of Random Invariants}%
\label{subsec:generationOfRandomInvariants}

A first-order formula can be represented by an expression tree with operations in the internal nodes and variables in the leaves. For example, Fig.~\ref{expression:tree} shows the tree representation of the first-order formula 
\begin{equation}\label{eq:random:invariants}
x = (x*y)+x'
\end{equation}
It is simple to randomly generate such an expression tree, as shown in Algorithm~\ref{algorithm:generate:random:invariant}. For simplicity, the root is set to be an operation that evaluates to true or false, and it always has two children. It is the only node that can be assigned the equality relation.  It may also be a binary relation if there is one in the algebra.

\begin{figure}
	\centering
	\begin{tikzpicture}[minimum size=.55cm,
	evel 2/.style={sibling distance=1.5cm},level 3/.style={sibling distance=.75cm},
	yscale=0.6,xscale=2,
	every node/.style={draw=gray,fill=blue!3}]
\node[circle,draw](z){$=$}
child{%
	node[circle,draw]{$x$}}
child{%
	node[circle,draw]{$+$} 
	child{node[circle,draw] {$*$}
		  child{node[circle,draw] {$x$}}
		  child{node[circle,draw] {$y$}}} 
	child{node[circle,draw] {$'$} child{node[circle,draw]{$x$}}} 
};
\end{tikzpicture}
	\label{expression:tree}
	\caption{Expression tree for $x = (x*y)+x'$.}
\end{figure}
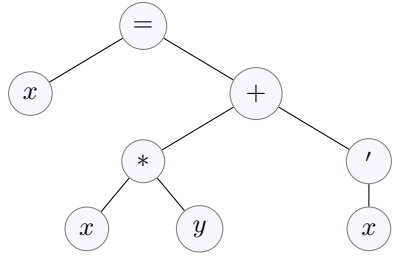

\begin{algorithm}[ht]
	\caption {Generation of Random First-order Formula}
	\label{algorithm:generate:random:invariant}
	\Input{A list of binary/unary operations/relations, max depth of tree, list of variables}
	\Output{An expression tree representing a first-order formula}\;
	
	\Func \BuildNode(\level)
	\Begin{%
		\tcc{recursively build a node}
		\tcc{\maxLevel is maximum depth of tree allowed}
		\If{\level = \maxLevel}{\nodeType $\gets$ \leaf\;}
		\Else{\nodeType $\gets$ random pick a \leaf, unary, or binary operation\;}
		create \newNode\;
		\If{\nodeType = \leaf}{%
			\newNode.\Value $\gets$ randomly pick a variable\;
			\Return \newNode\;}
		\newNode.\op $\gets$ randomly pick a unary or binary operation\;
		\newNode.\left $\gets$ \buildNode(\level+1)\;
		\If{\newNode.\op is a binary operation} { 
			\newNode.\right $\gets$ \buildNode(\level+1)\; }
		\Return \newNode
	}

	\Begin{%
		create root node $R$\;
		$R.\op$ $\gets$ randomly pick the equality operation, or one of the binary relations (if exists)\;
		$R$.\left $\gets$ \BuildNode(1)\;
		$R$.\right $\gets$ \BuildNode(1)\;
		\Return $R$ \;
	}
\end{algorithm}

\begin{proposition}\label{prop:random:invariant}
	If we fix a variable in the first-order formula generated by
	Algorithm~\ref{algorithm:generate:random:invariant} as the base variable,
	then the number of solutions will be an invariant.
\end{proposition}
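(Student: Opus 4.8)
The plan is to reduce the statement to Proposition~\ref{prop:solution:FOLformula}, which already guarantees that satisfaction of a fixed first-order formula is transported across an isomorphism. First I would make precise what ``number of solutions'' means. Let $\Phi$ be the formula produced by Algorithm~\ref{algorithm:generate:random:invariant}; since its root node is the equality relation or one of the binary relations of $\Sigma$, $\Phi$ is a genuine first-order formula in the common language $\Sigma$. Write its free variables as $x, y_1, \dots, y_k$, where $x$ is the designated base variable. For an algebra $C$ and an element $a \in C$, define the solution set
$$S^C_a := \{(b_1,\dots,b_k)\in C^k \mid C \models \Phi(a,b_1,\dots,b_k)\},$$
so that the invariant assigns to each $a$ the number $\lvert S^C_a\rvert$; collecting these over all $a\in C$ and sorting gives the vector $v_C$ whose invariance we must establish.

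Next, let $f\colon A\to B$ be an isomorphism, and let $f^{\times k}\colon A^k\to B^k$ be the bijection it induces coordinatewise. The key step is to observe that $f^{\times k}$ carries $S^A_a$ exactly onto $S^B_{f(a)}$. Indeed, Proposition~\ref{prop:solution:FOLformula}, applied to the tuple $(a,b_1,\dots,b_k)\in A^{k+1}$, states that $(a,b_1,\dots,b_k)$ satisfies $\Phi$ in $A$ if and only if $(f(a),f(b_1),\dots,f(b_k))$ satisfies $\Phi$ in $B$; equivalently, $(b_1,\dots,b_k)\in S^A_a$ if and only if $f^{\times k}(b_1,\dots,b_k)\in S^B_{f(a)}$. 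Since this biconditional goes both ways and $f^{\times k}$ is a bijection, its restriction is a bijection $S^A_a\to S^B_{f(a)}$, and hence $\lvert S^A_a\rvert=\lvert S^B_{f(a)}\rvert$ for every $a\in A$.

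Finally I would assemble the vectors. Because $f$ is a bijection of $A$ onto $B$, as $a$ ranges over $A$ the value $f(a)$ ranges over all of $B$, so the multiset $\{\lvert S^A_a\rvert \mid a\in A\}$ coincides with the multiset $\{\lvert S^B_{b}\rvert \mid b\in B\}$; after sorting, $v_A = v_B$, which is precisely the assertion that the count of solutions is an invariant.

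I expect the only real difficulty to be notational rather than mathematical: one must cleanly separate the fixed base variable $x$ from the remaining variables $y_1,\dots,y_k$ that are counted over, so that Proposition~\ref{prop:solution:FOLformula} can be applied to the full tuple $(a,b_1,\dots,b_k)$. Note that, because that proposition is stated as a biconditional, we obtain the bijection between solution sets in a single stroke and avoid the two-inclusion argument needed in the proof of Proposition~\ref{prop:u3}; finiteness of the structures enters only implicitly, to ensure that the counts $\lvert S^C_a\rvert$ are well-defined natural numbers.
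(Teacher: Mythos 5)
Your proof is correct, but it takes a different route from the paper's. The paper disposes of this proposition in one line --- ``Correctness is shown as in Proposition~\ref{prop:u3}'' --- i.e.\ it repeats the elementary two-inclusion argument: show $f(S^A_a)\subseteq S^B_{f(a)}$ directly from the preservation of operations, then apply the same argument to $f^{-1}$ to get the reverse cardinality inequality, and conclude equality by finiteness. You instead reduce everything to Proposition~\ref{prop:solution:FOLformula}, whose biconditional form hands you the bijection $S^A_a\to S^B_{f(a)}$ in a single stroke, exactly as you observe. Both arguments are sound, and the paper itself endorses your route elsewhere (it notes that the correctness of the unary invariants ``follows readily from Proposition~\ref{prop:solution:FOLformula}''), so your choice is arguably the more natural one here: the randomly generated formulas are arbitrary atomic formulas over the signature, and re-running the hands-on argument of Proposition~\ref{prop:u3} for an arbitrary term tree is more awkward than for the single equation $g(y)=x$. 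What the paper's route buys is self-containedness --- it does not lean on the model-theoretic preservation fact imported from the literature --- whereas your route buys generality and brevity, and cleanly isolates the one point that actually needs care, namely separating the base variable from the counted variables so that the preservation proposition applies to the full tuple $(a,b_1,\dots,b_k)$.
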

\begin{proof}
	Correctness is shown as in Proposition~\ref{prop:u3}.\qed
\end{proof}

We may choose any of the variables in the randomly generated first-order formula as the base variable for the invariant. For example, if we choose $x$ as the base variable for the first-order formula~(\ref{eq:random:invariants}), then the invariant would read: The number of $y\in{D}$ such that $x = (x*y)+x'$, which is a valid invariant by  the foregoing argument.

\subsection{Quality Measure of Random Invariants}%
\label{subsec:qualityMeasureOfRandomInvariants}
After a large set of invariants is generated, a small subset will be selected based on its ability to reduce the work of the next step in filtering out isomorphic models. For a block with $m$ models, the worst-case scenario requires comparing every pair of models for isomorphism.  There would be $m(m-1)/2$ comparisons in total.  Based on this observation, we measure the quality of invariants by a score as follows: For a collection of invariants that induces a set of blocks of models \{$S_i\mid 1\leq{i}\leq{n}$\}, its score is computed as
\begin{equation}\label{eq:1}
\sum_{i\in \{1..n\}} |S_j|(|S_j|-1)
\end{equation}
The goal is to find the set of invariants having the minimum score over all possible combinations of randomly generated invariants in conjunction with the basic invariants.

\subsection{Selecting Random Invariants}%
\label{subsec:optimal:random:invariants}

We are not aware of any tractable algorithm for finding the optimal subset from a large set of random invariants according to the quality measure stated above.%
\footnote{We conjecture that the problem is NP-hard; it resembles K-means clustering, which is NP-hard~\cite{AloiseML09}.}
A feasible solution is to apply a greedy algorithm (see page 282 of~\cite{DBLP:books/ph/PapadimitriouS82} for the general discussions) to a small sample of the models to find an approximate optimal subset. In practice, it is sufficient to use a sample size of 0.1--0.2\%, or a thousand, whichever is larger, of the original set of models (see Section~\ref{sec:experiments} for discussions).  The algorithm is detailed in Algorithm~\ref{algorithm:select:random:invariant}.  The idea of the algorithm is to start with the basic invariants, then add the random invariants and calculate the scores one-by-one, keeping the random invariant only if it gives a better score.  Then repeat the process of adding random invariants, calculating the scores, and picking the best random invariant that minimizes the score until it cannot be further improved, or a preset maximum number of trials is reached.
This subset of random invariants, which may or may not be truly optimal, will then be used together with the basic invariants for the next step.

\begin{algorithm}[ht]
	\caption{Selecting Random Invariants with Greedy Algorithm}
	\label{algorithm:select:random:invariant}
	\Input{A set of random invariants $R$, a set of models $M$, and maximum trials $T$}
	\Output{A set $K\subseteq R$ of random invariants with $|K|<T$}
	
	$K \gets \emptyset$ \;
	$\bestScore\gets \infty$\;
		$\done\gets\false$\;
		\While{$\neg{\done} \land |K|<T$}{%
			$\done\gets\true$\;
			$a\gets\emptyset$ \;
			\ForEach{$r \in R\smallsetminus K$}{%
				 $\trialScore\gets$ score of $K\cup\{r\}$ on $M$ according to equation~\eqref{eq:1}\;
				\If{$\trialScore < \bestScore$}{%
					$\bestScore\gets\trialScore$  \;
					$a\gets r$ \;
				}
			}
			\If{$a \neq \emptyset$}{%
				$\done\gets\false$\;
				$K\gets K\cup\{a\}$\;
			}
		}
		\Return $K$ \;
\end{algorithm}

Note that the main purpose of adding randomly generated invariants is not to divide the models into more blocks in all cases, but to increase the robustness of the algorithm by the automatic discovery of important invariants in some cases.

\section{The Invariant Algorithm}
\label{sec:algorithm}
First, we describe the algorithm without the randomly generated invariants.
For each domain element of a model, we calculate the basic invariants and put them into an ordered list, which is called the invariant vector of that element.
If the model has multiple unary, binary, or ternary operations, then invariant vectors are calculated for each of them for all elements, and all the invariant vectors of the same domain element are concatenated to form a combined invariant vector for that domain element.
It follows that each model with $n$ domain elements will be associated with $n$ combined invariant vectors. 
Isomorphic models must have the same set of invariant vectors. 

To facilitate comparisons of invariant vectors, we sort the elements by their invariant vectors  lexicographically.  It follows that models isomorphic to each other must have the same sorted invariant vectors.

Our goal is not only to compare two models for isomorphism but to extract all non-isomorphic models from a list of models.  In that case, we set up a hash map to store blocks of the models where models in different blocks are guaranteed to be non-isomorphic. We use the invariant vectors for each model to send the model efficiently to the block (in the hash map) to which it belongs. That is, the keys in this hash map are the invariant vectors, and the values are the blocks of the models.  After all models are hashed into the hash map, the blocks stored in the hash map can be processed separately, and possibly in parallel, to extract one representative model from each isomorphism class.
An example of construction and use of invariants can be found in \cite{araujo_et_al:LIPIcs.CP.2021.4}.

To add random invariants to the algorithm, a preprocessing step is added to select an optimal subset of random invariants before the normal process. As described in Section~\ref{subsec:optimal:random:invariants}, we construct a list of random invariants, calculate basic invariants and the random invariants on a small sample of the input models. Then apply the greedy algorithm to find an optimal subset of random invariants for further processing (see Section~\ref{subsec:optimal:random:invariants}).  Finally, proceed to normal processing with the basic invariants and the optimal random invariants together.

Note that our invariant-based algorithm does not compare models for isomorphism.  It only cuts down the size of the problem to improve the performance of existing isomorphism filters such as Mace4's \textit{isofilter}.

Invariants have the potential to cope with the increasing size of the order of the algebra very well as illustrated in the following example.
Suppose an invariant with extremely low discriminating power gives only 2 values, 0 and 1. However, when applied to the models of an algebra of order 2, it could actually give 4 possible invariant vectors: [0,0], [0,1], [1,0], [1,1]. 
It is easy to generalize this observation: applying an invariant that gives at most $m$ values to the models of an algebra of order $n$ could result in a maximum of $m^n$ distinct invariant vectors. Furthermore, if $k$ invariants give \{$m_1, m_2,\cdots,m_k$\} values, then the maximum number of invariant vectors would be
\begin{equation}\label{eq:max:invariants}
\prod_{i=1}^k{m_i}^n 
\end{equation}
From the above analysis on the intricate interactions between invariants, we can make two more important observations:
\begin{enumerate}
	\item Combining invariants of low discriminating powers could give an invariant vector of surprisingly high discriminating power if they are targeting different areas of the algebraic structures.
	\item In general, the number of non-isomorphic models increases rapidly as the order of the algebra increases, but so does the maximum number of possible invariant vectors. This helps invariants to retain their discriminating powers to some extent as the order of the algebra increases. This explains why the invariant-based algorithm is very scalable.
\end{enumerate}


\section{Experimental Results}%
\label{sec:experiments}
We have implemented an invariant-based preprocessor to Mace4's isomorphic models filters.
We run experiments on a 6-core Intel\textregistered\,Core\textsuperscript{\texttrademark} i7-9850H CPU computer, with 32 Gb RAM installed.

The ALF database~\cite{alfAraujoMatosRamires} contains a collection of 158 algebras of high interest to the research community of algebra. Their definitions are conveniently given in first-order formulas that Mace4 can directly process. We use Mace4 to generate models for each algebra of the highest possible order that it can complete within 2 minutes. Mace4 is not able to generate models for 5 of them within that time limit, and they are excluded from the tests. The excluded algebras are: \#112 Kleene algebra, \#113 Concurrent Kleene algebra, \#114 Omega algebra, \#137 Steiner quasigroup, and \#138 Steiner loop.  
In addition, Mace4's \textit{isofilter} is not able to handle two of the largest algebras 
(\#8 BL-algebras and \#56 Linear Heyting algebras), each has between 1 to 3 million models.
These two algebras are also excluded from most of the statistics of the experimental results. So, we end up with 151 algebras in many of our analyses.

When random invariants are used in the experiment, the number of randomly generated invariants is 50,  
but at most 20 of the best of them will be used.  
The maximum depth of the expression tree (see Section~\ref{subsec:generationOfRandomInvariants}) is 4,
and the maximum number of variables in it is 3.
In this section, random invariants are used unless otherwise specified.

Since we run a large comprehensive set of test cases for comparison, 
the size of each test case is necessarily limited (uniformly and systematically to make comparisons of results meaningful) by the computing resources available.  
However, even for the small model sizes used in our experiments, the addition of the invariant-based algorithm improves the overall speed by an order of magnitude, without using parallel processing (see Table~\ref{table:invariants:time}). 

The overheads of calculating invariants are observed to be on average about 20 to 30\% of the total run time
in our experimental setting (see the third column in~\ref{table:invariants:time}).
However, invariants improve the speed by orders of magnitudes for big algebras. For example, for the longest (in terms of runtime) 10 algebras in our experiment, the invariant-based algorithm improves the overall speed by over 50 times (see Table~\ref{table:invariants:time}).
In fact, a very desirable feature of the invariant-based algorithm is that the improvement increases dramatically as the size of the set of models grows. 
Granted, for algebras with short runtime, the use of invariants may not pay off.
But for those cases, the degradation is really insignificant (see Fig.~\ref{fig:scatter_time}). Thus, in general, there is no need to have special logic to decide when not to use invariants.

Furthermore, Mace4's \textit{isofilter} is not able to handle two of the largest data sets, but our invariant-based algorithm can partition these models into smaller blocks to fit in Mace4's limits (see Table~\ref{table:invariants:time}).

\begin{table}[htbp]
	\caption{Isomorphism Filtering, w/ vs.\ w/o Invariants}
	\begin{tabular}{lrrrrr}
		\toprule
		& & & \multicolumn{2}{c}{Total Runtime (s)} \\ \cmidrule(r){4-5}
		\multicolumn{1}{l}{}  & \multicolumn{1}{p{1.25cm}}{\centering \#Mace4 Outputs} & \multicolumn{1}{p{1.9cm}}{\centering Invariant Calc. Time (s)}& \multicolumn{1}{p{1.4cm}}{\centering With Invariants} & \multicolumn{1}{p{1.4cm}}{\centering Without Invariants} \\ \cmidrule{1-5}
		Shortest 10 Algebras & 600 & 0.2 & 0.5 & 0.1  \\
		Longest 10 Algebras & 9,239,818 & 430 & 1,982 & 87,591  \\  
		All 151 Algebras & 33,643,548 & 1,500 & 5,030 & 95,952  \\ 
		2 Isofilter Failed Algebras & 4,075,054 & 208 & 727.3 & N/A  \\ \bottomrule
	\end{tabular}
	\label{table:invariants:time}
\end{table}

\begin{figure}[th]
	\centering
  
\begin{tikzpicture}[scale=0.8, every node/.style={scale=0.8}]


  \begin{axis}[color=black!60,
    xmode=log,
    ymode=log,
    enlarge x limits=0,
    enlarge y limits=0,
    log ticks with fixed point,
    width=7cm,
    height=7cm,
    xmax=70000,
    ymax=70000,
]
    \addplot+[
    only marks,
    mark=+,
    darkblue,
    mark size=3pt] table {scatter_time.data.tex};
  \end{axis}

  \draw[color=darkred] (0, 0) -- (5.4,5.4);


  \node at (2.7,-.5) {w/ invariants};
  \node[rotate=90] at (-.9, 2.7) {w/o invariants};

  \node[below left] at (5.5,-0.2) {{\footnotesize time (s)}};

  \node[left] at (0, 5.3) {{\footnotesize time}};
  \node[left] at (0, 5) {{\footnotesize (s)}};

\end{tikzpicture}
 	\caption{Runtimes: w/ vs.\ w/o Invariants (151 ALF Algebras)}\label{fig:scatter_time}
\end{figure}
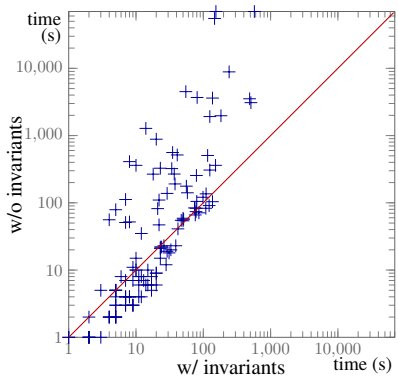

The performance of the invariant-based algorithm relies heavily on the discriminating power of its invariants. 
The best possible case is that only 1 non-isomorphic model is in every block, in which case, only $m-1$ comparisons of models are needed to eliminate all isomorphic models from a block of $m$ models. 
Our invariants are quite powerful as evidenced by the fact that the average number of non-isomorphic models per block is very close to 1 for the 151 algebras in the experiment (see Table~\ref{table:invariants:percentile}).

\begin{table}[htbp]
	\caption{Discriminating Power (153 ALF Algebras)}
	\begin{tabular}{crr}
		\toprule
		 & \multicolumn{2}{c}{\centering Avg \#Non-isomorphic Models per Block} \\ \cmidrule(r){2-3}
	    \multicolumn{1}{c}{\centering Percentile} & \multicolumn{1}{c}{\centering w/ Random Invariants}  & \multicolumn{1}{c}{\centering w/o Random Invariants} \\
		\cmidrule{1-3}
		95th & 1.346 & 2.677 \\  
		80th & 1.036 & 1.179 \\ 
		60th & 1.003 & 1.018 \\
		40th & 1.000 & 1.005 \\ \bottomrule
	\end{tabular}
	\label{table:invariants:percentile}
\end{table}


\begin{figure}[th]
\centering
\begin{tikzpicture}
	\begin{axis}[
		title={\# Random Invariants Used},
		xlabel={\# Random Invariants},
		ylabel={Cumulative Frequency (\%)},
		xmin=0, xmax=19,
		ymin=0, ymax=100,
		xtick={0,1,2,3,4,5,6,7,8,9,10,11,12,13,14,15,16,17,18,19},
		ytick={10,20,40,60,80,90,100},
		legend pos=south east,
		ymajorgrids=true,
		grid style=dashed,
		]
		
		\addplot[
		color=blue,
		mark=square,
		]
		coordinates {
			(0,10)(1,22)(2,46)(3,61)(4,72)(5,82)(6,91)(7,94)(8,97)(9,98)
			(10,98)(11,98)(12,98)(13,99)(14,99)(15,99)(16,99)(17,99)(18,99)(19,100)
		};
		\legend{Cum. Frequency}
		
	\end{axis}
\end{tikzpicture}
\caption{\# of Random Invariants (153 ALF Algebras)}\label{fig:cum:random:invariants}
\end{figure}
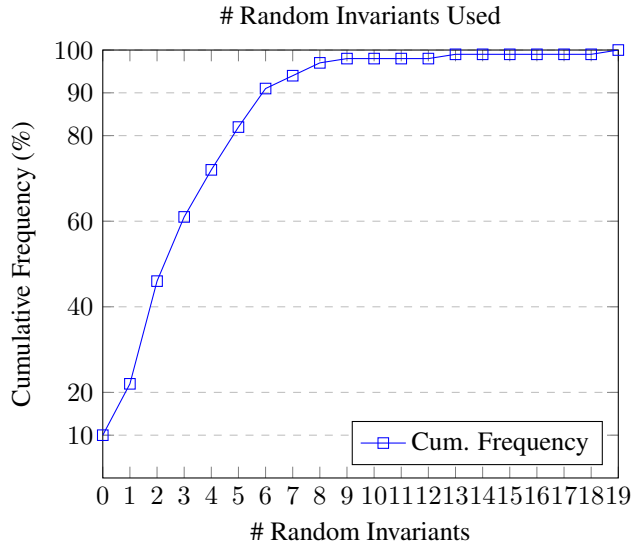

\subsection{Basic Invariants vs.\ Basic Invariants + Random Invariants}
\label{subsec:basicVsRandom}
As shown in Table~\ref{table:invariants:percentile}, the hand-crafted basic invariants have  very good discriminating power (see the last column in the table). 
Nevertheless, the addition of random invariants improves the discriminating powers (see the middle column of the table).
This increase in discriminating power comes with a small overhead in processing time as
the number of random invariants is quite small, usually just a few. For example, 6 or fewer random invariants are used in about 90\% of the algebras (see Figure~\ref{fig:cum:random:invariants}). 
For the case when the basic invariants are already doing a very good job, the addition of random invariants may not pay off.  But the degradation is minimal because the job would finish fast when the discriminating powers of the invariants are high (See the scatter plot Fig.~\ref{fig:scatter_time_basic_vs_random}). Therefore, there is no need for special logic to decide when not to use random invariants.
In our experiment, the overall run time for all 151 algebras is reduced when random invariants are added, with most of the improvements coming from the top 3 algebras, which are among the algebras that take the longest to finish (see Table~\ref{table:invariants:random:time} and Fig.~\ref{fig:scatter_time_basic_vs_random}).

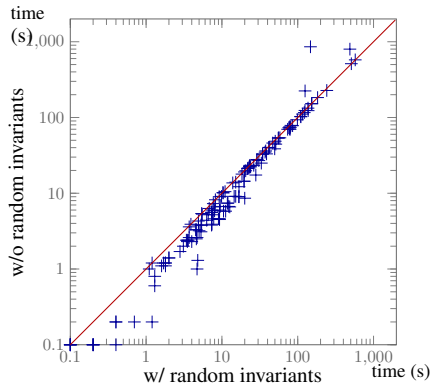
\begin{figure}[th]
	\centering
\begin{tikzpicture}[scale=0.8, every node/.style={scale=0.8}]


  \begin{axis}[color=black!60,
    xmode=log,
    ymode=log,
    enlarge x limits=0,
    enlarge y limits=0,
    log ticks with fixed point,
    width=7cm,
    height=7cm,
    xmax=2000,
    ymax=2000,
]
    \addplot+[
    only marks,
    mark=+,
    darkblue,
    mark size=3pt] table {scatter_time_basic_vs_random.data.tex};
  \end{axis}

  \draw[color=darkred] (0, 0) -- (5.4,5.4);


  \node at (2.7,-.5) {w/ random invariants};
  \node[rotate=90] at (-.9, 2.7) {w/o random invariants};

  \node[below] at (5.5,-0.2) {{\footnotesize time (s)}};

  \node[left,align=left] at (-0.3, 5.3) {\footnotesize time\\(s)};

\end{tikzpicture}
	\caption{Runtimes: w/ vs.\ w/o Random Invariants}
	\label{fig:scatter_time_basic_vs_random}
\end{figure}

\begin{table}[htbp]
	\caption{Isomorphism Filtering Time, w/ and w/o Random Invariants}
	\begin{tabular}{lrrrrr}
		\toprule
		& \multicolumn{2}{c}{Total Time (s)} \\ \cmidrule(r){2-3}
		\multicolumn{1}{l}{}  &  \multicolumn{1}{c}{\centering With Random Invariants} & \multicolumn{1}{c}{\centering Without Random Invariants} \\ \cmidrule{1-3}
		Top 3 Improved Algebras & 760 & 1,882  \\ 
		All 151 Algebras & 5,758 & 6,525  \\ \bottomrule
	\end{tabular}
	\label{table:invariants:random:time}
\end{table}

\subsection{Larger Data Sets}
\label{subsec:largerDataSets}
As remarked earlier, to cover a large variety of algebras, we have to limit the order of each algebra in the experiments. Small datasets do not adequately show the true advantage of the invariant-based algorithm. 
Here we present three examples in which we go two orders higher in each of the algebras, giving us test datasets of over a hundred million models. The first algebra (\#88 Quasi-MV-algebra) is defined by one binary operation and 1 relation, the second one (\#15 Brouwerian semilattices) by two binary operations, and the third one (\#4 BCK-join-semilattice) by two binary operations and 1 relation. As shown in Table~\ref{table:higherInvariants:time}, at the baseline when the order of the algebra is small, the invariant algorithm slows down the process very slightly.  However, as the order of the algebra goes higher, the invariant-based algorithm improves the speed by orders of magnitudes.
In some cases, Mace4 is not able to handle a large number of models. In all the tests, we also observe that the discriminatory powers of the invariants hold up quite well in large datasets (see Table~\ref{table:higherInvariants:avgModels}).

\begin{table}[htbp]
	\caption{Isomorphism Filtering, w/ vs.\ w/o Invariants for Higher Orders}
	\begin{threeparttable}
	\begin{tabular}{lrrrr}
		\toprule
		& & & \multicolumn{2}{c}{Total Runtime (s)} \\ \cmidrule(lr){4-5}
		& \multicolumn{1}{l}{Order}  & \multicolumn{1}{p{1.55cm}}{\centering \#Mace4 Outputs} & \multicolumn{1}{p{1.4cm}}{\centering With Invariants} & \multicolumn{1}{p{1.4cm}}{\centering Without Invariants} \\ \cmidrule{1-5}
		\#88 Quasi-MV-algebra & 7 & 10,902 & 1.6 & 0.7  \\
		& 8 & 4,793,924 &  558 & 30,701  \\ 
		& 9 & 29,799,618 & 3,666 & N/A\tnote{1}  \\ 
		\#15 Brouwerian semilattices & 6 & 47,349 & 12 & 4  \\
		& 7 & 2,247,564 & 440 & 1,964  \\ 
		& 8 & 146,875,177 & 40,017 & N/A  \\ 
		\#4 BCK-join-semilattice & 9 & 122,754 & 23 & 15  \\
		& 10 & 1,175,784 & 305 & 532  \\ 
		& 11 & 12,307,002 & 1,213 & 54,524  \\ \bottomrule
	\end{tabular}
	\begin{tablenotes}\footnotesize
	\item[1] Isofilter fails after processing 4.5 million (15\% of all) models in 11.5 hours.
	\end{tablenotes}
	\end{threeparttable}
	\label{table:higherInvariants:time}
\end{table}

\begin{table}[htbp]
\caption{Discriminating Power of Invariants for Higher Orders}
	\begin{tabular}{lrrrr}
		\toprule
		& & & \multicolumn{2}{c}{Non-isomorphic Models} \\ \cmidrule(lr){4-5}
		& \multicolumn{1}{l}{Order}  & \multicolumn{1}{p{1.7cm}}{\centering \#Blocks} & \multicolumn{1}{p{1.4cm}}{\centering Total} & \multicolumn{1}{p{1.9cm}}{\centering Avg per Block} \\ \cmidrule{1-5}
		\#88 Quasi-MV-algebra & 7 & 567 & 477 & 1.19  \\
		& 8 & 153,163 & 55,544 & 2.76  \\ 
		& 9 & 264,972 & 141,750 & 1.87   \\ 
		\#15 Brouwerian semilattices & 6 & 745 & 745 & 1.00  \\
		& 7 & 8,272 & 8,272 & 1.00  \\ 
		& 8 & 115,801 & 114,943 & 1.01  \\ 
		\#4 BCK-join-semilattice & 9 & 26 & 26 & 1.00  \\
		& 10 & 47 & 47 & 1.00  \\ 
		& 11 & 82 & 82 & 1.00  \\ \bottomrule
	\end{tabular}
\label{table:higherInvariants:avgModels}
\end{table}

\subsection{Parallel Processing}
\label{subsec:parrallel:processing}
The invariant-based algorithm is very scalable since the data are divided into blocks that can be processed independently as long as resources are available. In this experiment, we apply parallel processing to the top 3 algebras with the longest runtimes (close to 500s or more). We run each of them with 5 parallel threads and see about a 50 - 60\% reduction in run times (see Table~\ref{table:invariants:parallel:time}).  The results would be even better if more resources are available as a large number of blocks are available in these cases.
\begin{table}[htbp]
	\caption{Isomorphism Filtering, Serial vs.\ Parallel}
	\begin{tabular}{lrrrr}
		\toprule
		& & & \multicolumn{2}{c}{Runtime (s)} \\ \cmidrule(r){4-5}
		\multicolumn{1}{l}{} & {\centering Order} &
		{\centering \#Blocks} &
		 \multicolumn{1}{c}{\centering Serial} & \multicolumn{1}{c}{\centering Parallel} \\ \cmidrule{1-5}
		\#8 BL-algebras & 5 & 735,820 & 574 & 254  \\  
		\#20 Commutative lattice-ordered monoids & 7 & 15,499 & 510 & 240 \\
		\#145 Digroup & 12 & 17 & 488 & 177 \\  \bottomrule
	\end{tabular}
	\label{table:invariants:parallel:time}
\end{table}


\section{Related Work}%
\label{sec:relatedwork}
Classes of algebras can often be defined in first-order formulas. For these algebras, a finite model finder, such as Mace4~\cite{Mccune_2003_tech264}, SEM~\cite{DBLP:conf/ijcai/ZhangZ95}, and FALCON~\cite{ZhangJ1996Falcon}, FMSET~\cite{DBLP:journals/fuin/BenhamouH99}, etc., can work on finding all their models. 
A well-known issue with this approach is that first-order formulas introduce symmetries into
the problem, which leads to the generation of a huge number of isomorphic models in the outputs~\cite{DBLP:conf/frocos/RegerR019}. These isomorphic models can either be suppressed in the
search phase or be removed in a postprocessing step after the models are generated. 

Past work in enumerating non-isomorphic finite models has focused on not generating isomorphic models in the search phase. 
Symmetry breaking is thus a central focus of their research~\cite{Claessen03newtechniques,DBLP:conf/frocos/RegerR019,articleCrawfordGinsbergLuksRoy1997}.
An excellent example of a simple, powerful, and general algorithm to break symmetry \textit{dynamically} is the least number heuristic (LNH)~\cite{ZhangJ1996Falcon,DBLP:journals/jar/AudemardBH06}, which picks the smallest one not used so far when a new domain element is to be selected during the search.
This is implemented in many solvers such as Mace4, SEM, FMSET, and FALCON\@. Another symmetry breaker, the eXtended least number heuristic (XLNH)~\cite{DBLP:journals/jar/AudemardBH06,Audemard2001XLNH}, is based on similar ideas as the LNH, but could give better performance if there is at least one unary operation in the FOL clauses that define the model. It is also implemented in many finite model enumerators such as SEM\@. 

The underlying idea of LNH can also be applied to break symmetries \textit{statically}.
This is necessary for approaches where we do not wish to modify the underlying solver.
This is the case for finite model finders based on SAT solvers~\cite{DBLP:conf/frocos/RegerR019,Claessen03newtechniques,DBLP:conf/lpar/JanotaS18}.
The issue is the overhead of encoding LNH in conjunctive normal form (CNF)  as well as its complex interaction with the SAT solver.
Originally, LNH  was only encoded for constants~\cite{Claessen03newtechniques}.
Later, with additional effort,  it was shown that other terms can also be considered~\cite{DBLP:conf/frocos/RegerR019}.
 

The addition of symmetry-breaking input clauses could be useful in steering the searcher away from  the needless exploration of sub-search space~\cite{articleCrawfordGinsbergLuksRoy1997}. 
For example, it is well-known that a finite semigroup has at least one idempotent element, 
so we may add the clause 0 * 0 = 0 to the list of input clauses to cut off the search of the branch 0 * 0 = 1, etc. 
However, this kind of symmetry breaking often requires deep knowledge of the algebra in question, which
may not be available when the algebra is first studied.

Most importantly, these symmetry-breaking techniques do not guarantee isomorph-freeness. While not generating isomorphic models would be ideal, but to guarantee that isomorphic models are not produced in the search phase is a hard problem that few modern-day solvers attempt to do. 
Systems that do try to do so, 
such as SEMK~\cite{BOYDELATOUR200591,MCKAY1998306} and SEMD~\cite{ZhangJia2006Iso}, are either yet-to-be-completed or are better off allowing some isomorphic models in the outputs for some cases for better efficiency.

When isomorphic models are not totally suppressed in the search phase, they would need
to be removed in the post-processing steps. Many of them use a limited number of invariants to help speed up the process.
Mace4, for example, has a program, \textit{isofilter}, to filter out 
isomorphic models that it generates in the search phase. It calculates one invariant, frequency of occurrence of domain element, that is, the number of times a domain element appears in the operation tables.
It uses this invariant to help separate non-isomorphic models and to help guide the construction
of isomorphic functions between potentially isomorphic models. 
Needless to say, the discriminating power of one single invariant is limited.  
Indeed, it fails miserably when the operation table is a Latin square, which is the case for quasigroups.
To alleviate this issue, Mace4 has another program, \textit{isofilter2}, which does not try to construct 
isomorphic functions between models, but to convert models to their canonical forms based on the same algorithm~\cite{MCKAY1998306} used by SEMK\@.  \textit{Isofilter2} works very well with quasigroups, but
the overhead in computing the canonical forms of the models is so high that it becomes slower than \textit{isofilter} for many algebraic structures such as semigroups. 

The loops package~\cite{loops3.4.1} in GAP~\cite{GAP4} is not a finite model enumerator 
but provides a stand-alone function to extract non-isomorphic models from a list of quasigroups.  
It uses 9 invariants, some of which are expensive to calculate, to help separate non-isomorphic
models, and to guide the construction of isomorphic functions between models having the same invariant vectors. These 9 invariants exploit the specific properties of the quasigroup, and may not be effective
for other algebraic structures.  On the other hand, our invariants target many different areas in the common algebraic structures. Furthermore, their invariant vectors are limited to one binary operation table. Our invariant vectors can be constructed from multiple unary, binary, and ternary operation tables. 

Invariants are sometimes incorporated into the finite-model enumerating algorithm for specific algebras. 
These invariants are sometimes very simple and easy to compute, such as those in the algorithm for enumerating quandles~\cite{ElhandadiMacquarrieRestrepoAutoGroupsQuandles}.
But others may be very complicated, not easy to implement, and not cheap to compute as 
in the case of the algorithm for enumerating inverse semigroups~\cite{Malandro2019EnumerationOF} using the constraint solver, Minion~\cite{DBLP:conf/ecai/GentJM06}. Furthermore, the number of invariants used in these cases is usually very small, often two or fewer. Recall expression~\ref{eq:max:invariants} on page \pageref{eq:max:invariants} which shows that the number of invariants could increase the discriminating power drastically.
Our invariants are cheap to calculate, are applicable to more algebraic structures, and are high in number to provide more opportunities for separating non-isomorphic models.   Moreover, they can easily be incorporated into any finite model enumerator.

Neither Mace4's nor loops' isomorphic model
filters make use of the hash table to store the models so that non-isomorphic models will 
never be compared once they are separated by their invariant vectors.
This introduces some inefficiencies in their algorithms. Thus, both could benefit immensely from the reduced number of models in the blocks created by our invariant-based algorithm as a preprocessing step.

Another important feature in the invariant-based algorithm is randomization.
Using randomization in the search phase in Boolean Satisfiability (SAT) and 
Constrained Programming (CSP) algorithms is a tried and tested technique~\cite{DBLP:conf/aaai/GomesSK98,DBLP:conf/cp/BaptistaS00,Lynce02completeunrestricted}. 
This strategy is built into many SAT solvers such as Chaff~\cite{DBLP:conf/dac/MoskewiczMZZM01}. 
However, using randomly generated invariants to help separate non-isomorphic models in the finite model enumeration is a novel idea.
It helps solve the hardest problems in filtering isomorphic models as shown in our experiments,
and consequently, increases the robustness of the invariant-based algorithm.


\section{Future Work and Conclusions}%
\label{sec:futurework}

As pointed out in section~\ref{sec:experiments}, the efficiency of the invariant-based algorithm relies
heavily on the discriminating powers of both the hand-crafted and the randomly generated invariants. 
Future work will therefore concentrate on finding powerful invariants to target common algebraic structures,
and to find the best parameters to generate optimal random invariants. 
For example, what depth and breadth of the expression tree would be most cost-effective in
generating random invariants?  
What is the best range of ratios of binary operations to unary operations in a random invariant?
What is the best size of the sample to use in finding the optimal set of random invariants?

In summary, we present in this paper an algorithm that uses invariants both as discriminators and as hash keys to partition a set of models into blocks, in which no models across blocks are isomorphic. The blocks are
hashed into a hasp map so that they will not be processed together.
Included in the algorithm is
the novel idea of using randomly generated invariants to supplement hand-crafted invariants to make the algorithm more robust. 
We show that the invariant-based algorithm is simple,
efficient, scalable, and parallelizable. It is also compatible with most, if not all, existing finite model enumerators. 
It can be used as a stand-alone preprocessor to split models into blocks to feed
into isomorphic model filters, or it can be directly incorporated into them.
Future research will concentrate on finding powerful invariants in different areas of algebraic structures, and on the automatic discovery of optimal random invariants.

\section*{Acknowledgments}
The results were supported by the Ministry of Education, Youth and Sports within the dedicated program ERC CZ under the project POSTMAN no.~LL1902 and
Funda{\c c}\~ao para a Ci\^encia e a Tecnologia,
through the projects UIDB/00297{-}/2020
(CMA),
PTDC/MAT-PUR/31174/2017, UIDB/04621/2020 and UIDP/04621/2020.
This scientific article is part of the RICAIP project that has received funding from the European
Union's Horizon~2020 research and innovation programme under grant agreement No~857306.

\bibliographystyle{plain}
\bibliography{refs}
\end{document}